\newenvironment{proof}[1][Proof]{\begin{trivlist}
\item[\hskip \labelsep {\bfseries #1}]}{\end{trivlist}}
\newenvironment{claim}[1][Claim]{\begin{trivlist}
\item[\hskip \labelsep {\bfseries #1}]}{\end{trivlist}}
\renewcommand\footnotemark{}
\begin{document}
\title{Distributed Intrusion Detection of Byzantine Attacks in Wireless Networks with Random Linear Network Coding}

\author{\IEEEauthorblockN{Jen-Yeu Chen}
\IEEEauthorblockA{Department of Electrical Engineering\\
National Dong Hwa University\\
Email:jenyeu@mail.ndhu.edu.tw}
\and
\IEEEauthorblockN{Yi-Ying Tseng}
\IEEEauthorblockA{Department of Electrical Engineering\\
National Dong Hwa University\\
Email:m9723064@ems.ndhu.edu.tw}
}

\maketitle

\let\thefootnote\relax\footnote{*This paper is also published in the open access journal:International Journal of Distributed Sensor Network}
\begin{abstract}
Network coding is an elegant technique where, instead of simply relaying the packets of information they receive, the nodes of a network are allowed to combine \emph{several} packets together for transmission and this technique can be used to achieve the maximum possible information flow in a network and save the needed number of packet transmissions. Moreover, in an energy-constraint wireless network such as Wireless Sensor Network (a typical type of wireless ad hoc network), applying network coding to reduce the number of wireless transmissions can also prolong the life time of sensor nodes. Although applying network coding in a wireless sensor network is obviously beneficial, due to the operation that one transmitting information is actually combination of multiple other information, it is possible that an error propagation may occur in the network. This special characteristic also exposes network coding system to a wide range of error attacks, especially Byzantine attacks. When some adversary nodes generate error data in the network with network coding, those erroneous information will be mixed at intermeidate nodes and thus corrupt all the information reaching a destination. 
Recent research efforts have shown that network coding can be combined with classical error control codes and cryptography for secure communication or misbehavior detection. Nevertheless, when it comes to Byzantine attacks, these results have limited effect. In fact, unless we find out those adversary nodes and isolate them, network coding may perform much worse than pure routing in the presence of malicious nodes. In this paper, a distributed hierarchical algorithm based on random linear network coding is developed to detect, locate and isolate malicious nodes. To the best of our knowledge, this paper is the first one in the literature that proposes a distributed   intrusion detection and isolation scheme to effectively conquer Byzantine attacks for Random Linear Network Coding in a wireless network.

\emph{Index Terms}---Random Linear Network Coding, Byzantine attacks, intrusion detection, network coding, wireless sensor network, locating, watchdog.

\end{abstract}

\IEEEpeerreviewmaketitle

\section{Introduction}
\subsection{Network Coding}
Network coding has become a paradigm shift in information transmission, it is first brought up by Prof. Shuo-Yen Robert Li et al \cite{RobertLi}. Instead of traditional information transmission method, storing and forwarding, network coding allows intermediate nodes to mix received information together and transmit new information generated by the received information in terms of encoding. Due to encoding operation at intermediate nodes, data can be regarded as information flow through network, which is in a sense of data compression. Therefore throughput and bandwidth efficiency can be increased and delay can be decreased also via network coding. In \cite{RobertLi}, Prof. Shou-Yen Robert Li has showed that network capacity with network coding can be bounded by min-cut max-flow theory, which is larger than traditional storing-and-forwarding method.

\subsection{Random Linear Network Coding}
Recent research's having proven throughput gain of network coding in variety of application makes network coding an attractive topic. With algebraic approaches, such as \cite{algebraicNC}, a communication pattern with network coding of a network can be designed and achieve its promised capacity, which is the min-cut from the source to the sinks in a network graph \cite{RobertLi}. However, algebraic approaches require much central information and optimized coding scheme is actually not practical to design at most time \cite{insufficiency}. Then a distributed method of network coding has been developed – Random Linear Network Coding, shorted as RLNC \cite{RLNC}. RLNC is a powerful tool to disseminate information in networks for it is distributed and robust against dynamic topology. Without knowing central information such as network topology, RLNC regards every encoded packet as a coding vector over a finite field $\Bbb{F}_q$ and generates new packets at intermediate nodes by linearly combining received packets with random coefficient. Some overhead in packet's header is introduced to record how packets are combined (in \cite{RLNC}, it is called global encoding vector) and sinks can do decoding and recover original information as long as they retrieve enough packets.

\subsection{Security issue of network coding}
Network coding shows its variety of possibilities and benefit in information dissemination, however, it also introduces new type of security issue. The most serious security challenges posed by network coding thus seem to come from various types of Byzantine attacks, especially packet-modifying attack. In particular, RLNC has been shown very robust to packet losses induced by node misbehavior \cite{RLNCrobusttoloss}. Nevertheless, when it comes to packet-modifying attack, RLNC has become quite vulnerable. In RLNC, one intermediate nodes will linearly combine received packets and generate new packets to next multiple receivers. If this node has been compromised and generates error packets, other nodes received those error packets will also be modified for those error packets will stay in buffer and keep being combined with normal packets. Hence, nodes of each path these error packets go through would become new compromised nodes without self-awareness and disseminate more error packets. In other word, the error due to modified packets will \emph{propagate} in network with RLNC. Eventually, the whole communication network may be crushed just because of one single adversary node. Fig.~\ref{fig_error_prop} shows how a single adversary node propagates error.

The paper is organized as follows: Section II illustrates pros and cons of related works on Byzantine attacks, Section III describes our model and algorithm, Section IV gives the simulation results and analysis, Section V shows mathematical analysis. Section VI concludes the paper with a summary of the results and discussion of further work.

\section{Related work}
Existing method mostly modifies the format of coded packet against Byzantine attacks, and can be divided into two main categories: (1) misbehavior detection, and (2) end-to-end error correction.
\subsection{Misbehavior Detection}
Misbehavior detection applies error control technique or information-theoretic frameworks of encryptography to detect the modification introduced by Byzantine attackers. By types of nodes who take care of coding burden, misbehavior detection can be further divided into \emph{generation-based} and \emph{packet-based}. \emph{Generation-based} detection takes similar advantage as error-correcting codes and lays expensive computation tasks on destination nodes. As long as enough information is retrieved by destinations, modification can be detected. \cite{byzantinemodification} proposes an information-theoretic approach for detecting Byzantine modification in networks employing RLNC. Each exogenous source packet is augmented with a flexible number of hash symbols that are obtained as a polynomial function of the data symbol. This approach depends only on the adversary not knowing the random coefficient of all other packets received by the sink nodes when designing its adversarial packets. The hash schemes can be used without the need of secret key distribution but the use of block code forces an priori decision on the coding rate. Moreover, the main disadvantage of generation-based detection schemes is that only nodes with enough packets from a generation are able to detect modifications and thus, result in large end-to-end delays.

\begin{figure}[!t]
\centering
\includegraphics[width=8cm]{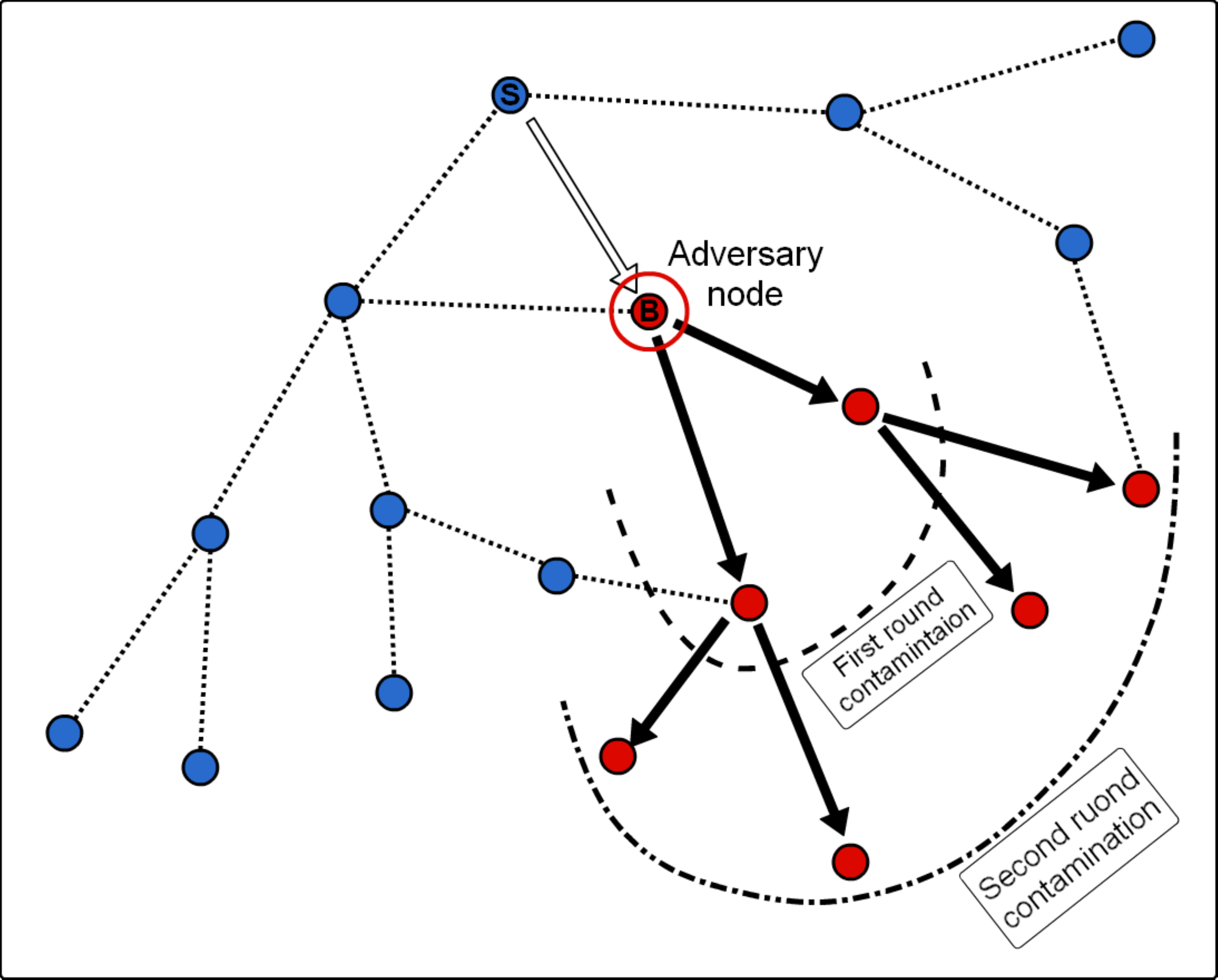}
\caption{Error propagation due to modifying packets by Byzantine nodes in a network with RLNC}
\label{fig_error_prop}
\end{figure}
On the contrary to generation-based detection schemes, \emph{packet-based} detection schemes allow intermediate nodes in the network detecting modified data on the fly and drop modified packets instead of only relying on destinations, which is more suitable for high attack probability compared to generation-based detection schemes. Packet-based detection schemes require active participation of intermediate nodes with ability to compute hash function or generate signature based on homomorphic hash functions \cite{signatureNC}, \cite{on-the-fly}. Hash of a coded packet can be easily derived from the hashes of previously encoded packets; in that way, intermediate nodes can verify validity of encoded packets before linearly combining them. This characteristic also prevents from error propagating in network. Unfortunately, homomorphic hash function is also computationally expensive and can't be used in inter-session network coding scenario while different sources combine their own source information together.

\subsection{End-to-end Error Correction}
End-to-end error correction schemes include error correcting code method into the process of encoding packets and sinks can correct error and recover original information under certain amount of error. Like generation-based detection schemes, end-to-end error correction schemes lay all encoding and decoding tasks on sources and sinks, such that intermediate nodes are not required to changer their mode of operation. The transmission mode for end-to-end error correction schemes with network coding can be described by matrix channel {\boldmath$Y = AX + Z$}, where {\boldmath$X$} is the matrix whose rows are the source packets, {\boldmath$Y$} corresponds to the matrix whose rows are received packets at sinks, {\boldmath$A$} denotes the transfer matrix, which records linear transformation operated on packets while they traverse the network, also called global encoding vectors, and {\boldmath$Z$} describes the matrix according to the injected error packets after propagate over the network. With error-correcting code, we can recover {\boldmath$X$} from {\boldmath$Y$}. \cite{codingforerr}, \cite{rank-metric} and \cite{usingrank-metric} discuss performance of error correction ability while some channel information, such as loss rate or error probability, is known. \cite{capacityforRNC} proposes a simple coding schemes with polynomial complexity for a probabilistic error model of random network coding and provides bounds on capacity. \cite{resilient} provides a special coding method, which adds a zero vector in the transmitted packet at the source node with assumption that there is a secret channel between source nodes and sink nodes to inform sinks where the zero vector locates in the transmitted packet. This information can't be seen by intermediate nodes and it will be very useful while Byzantine attackers maliciously modify the transmitted packet. As a matter of fact, under some modification level, the more modification, the more likely sinks can recover the original information by using information from observing modified zero vectors. \cite{resilient} also gives bounds on capacity for two adversarial mode: when Byzantine attackers have limited eavesdropping ability, optimal rate would be C-z; when Byzantine attackers can eavesdrop all links, optimal rate would be down to $C-2z$, where $C$ is the network capacity and $z$ is the number of links controlled by attackers. With special error-correcting code, sinks can be more tolerant with errors, but this scheme also introduces large overhead in packets which result in tremendous transmission efficiency decreasing.

Even though end-to-end error correcting schemes can recover original information at sinks, it can't stop error from propagating and introduces large overhead (in worst case, only ${1\over 3}$ of a packet carries data); misbehavior detection schemes can intercept modified packets on the fly to prevent errors from propagating, but it unfortunately takes expensive computation complexity. We will propose a new type of network coding packet and a distributed algorithm to locate Byzantine attackers and then isolate those nodes. Our algorithm essentially control the error propagation over the network and is not computationally expensive. Detailed introduction is in the next section.

\section{Network Model and Byzantine Attackers}
\subsection{Network Model with RLNC}
Consider a wireless network of $n$ nodes with communication range of $r$ randomly distributed in a square area, represented by an undirected graph $G=(V,E)$, with $\mid V\mid =n$ nodes. Let $d(i,j)$ denotes the distance from node $i$ to node $j$. An edge $e_ij \in E$ when $d(i,j)\leq r$. Besides, these $n$ nodes have the ability to access the information of their position. Without loss of generality, we assume the lower left corner of the square area to be the origin and each nodes know their coordinate such as $(3, 4)$.

In the communication pattern in which we are interested, each node can perform RLNC to disseminate messages. One source $S$ trying to multicast $k$ messages $\{m_1,\dots ,m_k\}$ to $d$ destinations $\{D_1,\dots ,D_d\}$ transmits those messages as vectors of bits which are of equal length $u$, represented as elements in the finite field $\Bbb{F}_q$, where $q=2^u$.The length of the vectors is equal in all transmissions and all links are assumed to be synchronized with a global clock splitting time into slots or rounds which are common to all nodes tin the network. In each time slot, nodes with messages in buffer send out new messages on edges to other nodes simultaneously. Let $S_i (t)=\{f_1,\dots ,f_{|S_i (t)|}\}$ be the set of all messages at nodes $i$ at time slot $t$, and by definition, for $f_l \in S_i (t) , 1\leq l\leq |S_i (t)| , f_l \in \Bbb{F}_q$ and $f_l = \sum_{u=1}^n{\alpha_{l_u}m_u}, \alpha_{l_u} \in \Bbb{F}_q$. When a node $i$ sends out a message , this message is actually a liner combination, called \emph{local encoding}, of the messages stored in node $i$ with payload $g_{i,out} \in \Bbb{F}_q$, where
\[
g_{i,out} = \sum_{f_l \in S_i (t)}{\beta_l f_l}, \beta_l \in \Bbb{F}_q; Pr(\beta_l = \beta) = {1\over q}, \forall \beta \in \Bbb{F}_q
\]
The vector {\boldmath$\beta$} $=[\beta_1,\dots ,\beta_{|S_i (t)|}]$ is called \emph{local encoding vector}, and the message $g_{i,out}$ can be further written as follows.
\begin{equation*}
\begin{aligned}
g_{i,out}=\sum_{f_l \in S_i (t)}{\beta_l f_l}= \sum_{f_l \in S_i (t)}{\beta_l  \sum_{u=1}^k{\alpha_{l_u}m_u}}\\
= \sum_{u=1}^k{\left(\sum_{l=1}^{|S_i (t)|}{\beta_l \alpha_{l_u}}\right)m_u}=\sum_{u=1}^k{\gamma_u m_u}, 
\end{aligned}
\end{equation*}
where $\gamma_u =\sum_{l=1}^{|S_i (t)|}{\beta_l \alpha_{l_u}} \in \Bbb{F}_q$ and the vector {\boldmath$\gamma$} $=[\gamma_1, \dots ,\gamma_k]$ is called \emph{global encoding vector}. The global encoding vectors are transmitter over the network for decoding and we define our transmitted packets as Fig. ~\ref{fig_packet_form} to assure that coefficients $\gamma_u$ are recoded and nodes know that.

\subsection{Threat Model and Our Algorithm}
We propose an algorithm, Distributed Hierarchical Adversary Identification and Quarantine, to fight against packet-modifying attack introduced by compromised Byzantine nodes. Assume $z_0$ out of $n$ nodes has been compromised as Byzantine nodes and they will modify every packet they send out in order to crash the whole network transmission. Specifically speaking, these Byzantine nodes modify the global encoding coefficients or payload of newly generated outgoing messages, which result in error due to that the modified vectors may not belong to the vector space spanned by source messages and further propagate the errors by following linear combinations of other nodes. We seek an algorithm to locate these Byzantine nodes and isolate them, so that they cannot affect the network.

As mentioned above, network coding is susceptible to the packet-modifying attacks for errors will propagate by operation of linear combinations. However, our algorithm, DHAIQ, uses this characteristic to let error propagate within a certain range in order to let some chosen nodes, referred as \emph{watchdogs}, detect that there are some Byzantine nodes in the monitored area. Before starting our algorithm, we assume that node density and is known by every nodes from operating other algorithm such as aggregate computation. DHAIQ can mainly divided into $5$ steps:\\
\begin{enumerate}
\item \label{st1}When a network is under packet-modifying attacks, an arbitrary node in the network will trigger the whole algorithm. This node is the watchdog of the $1st$ level. This first watchdog will awake the $2nd$ level's \emph{four} watchdogs and pass two messages, which are node density and the monitoring area size. The node density is a criterion of termination scheme and the whole deployment area is the $2nd$ level's monitoring range as figure ~\ref{sfig:step1} illustrates. The awaken watchdogs are chosen by locations. These four watchdogs are situated in each corner of their common monitoring area. After awaking the $2nd$ level's watchdogs, the first watchdog ends its monitoring mode and turns back to its normal mode.\\
\item \label{st2}Each of the $2nd$ level's watchdogs will generate its own special packet, referred as \emph{probe packet}. It then sends this probe packet to the other three watchdogs in an area-restricted flooding way as described in figure ~\ref{sfig:step2}. Except for these watchdogs, every node that receives these packets will do encoding and then sends new packets to all its neighbors. These packets will be linearly combined via intermediate nodes and constrained to disseminate within the monitoring range. This is all determined at the $2nd$ level. There are four watchdogs and obviously four different probe packets which are in the same \emph{generation}. The packets belonging to the same generation will start and terminate transmitting simultaneously based on a \emph{time stamp}. Any node that receives the probe packets the first time will record this time stamp. Nodes will continue encoding and sending out packets until the time stamp is expired. If a probe packet reaches a node outside the monitoring range, this node will drop that packet. The information carried by probe packets only traverse in the monitoring range. With the time stamp, all nodes that belong to the same monitoring area can terminate transmitting simultaneously. Before the termination of monitoring, all watchdogs keep retrieving packets from other nodes and keep a \emph{packet pool} in their buffer. An arriving packet is called \emph{innovative packet} only if it is linear independent to each packets stored in a watchdog's buffer. The discard rule is to keep innovative packets and drop all non-innovative packets. In this way, we also can limit buffer size to a pretty small value. There will be only four packets if there's no adversary node in the monitoring area. Watchdogs also keep computing the rank of vector space spanned by buffered packets until this generation is expired. \\
\begin{figure}[!t]
\centering
\includegraphics[width=8cm]{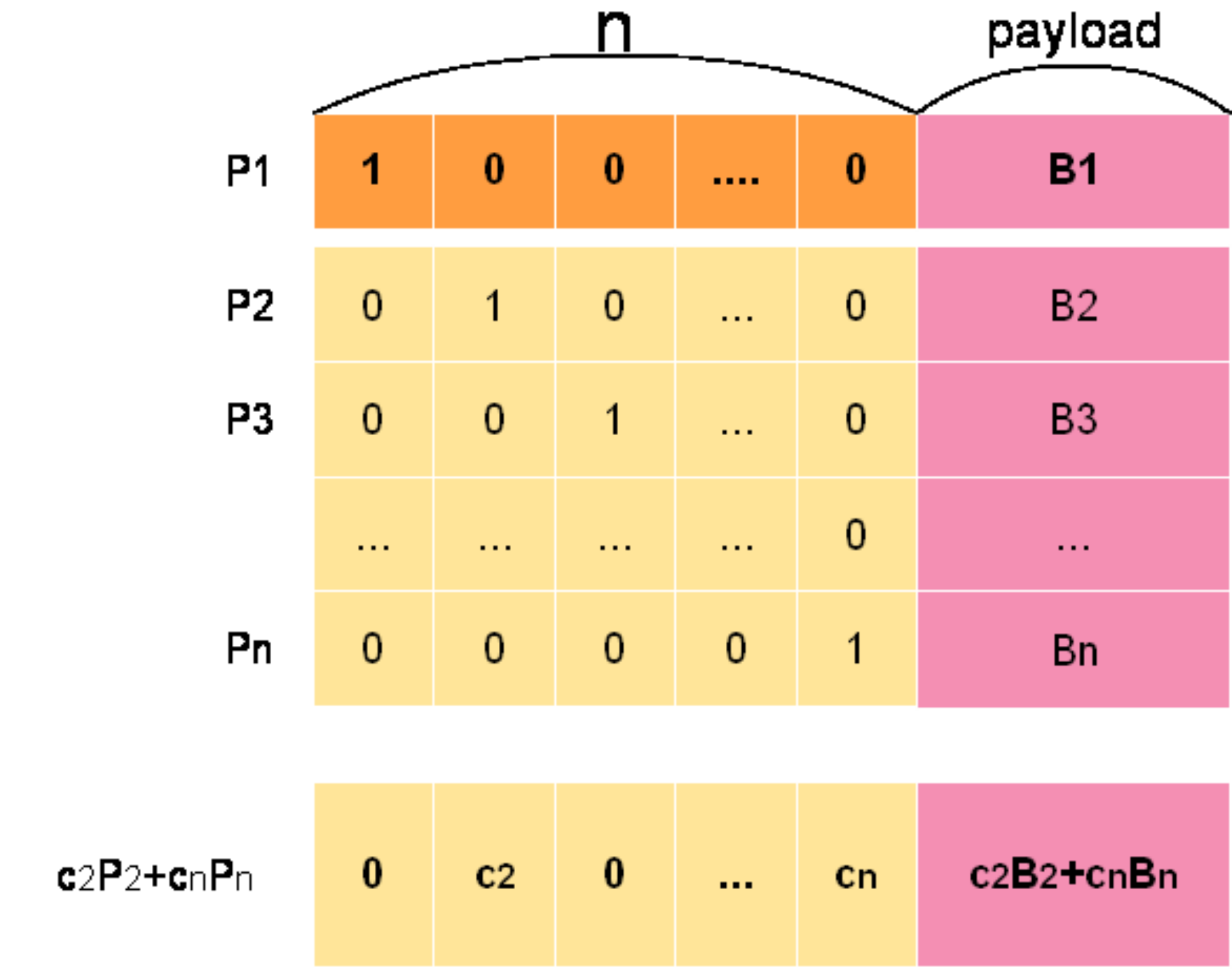}
\caption{The practical format of transmitted packets}
\label{fig_packet_form}
\end{figure}
\item \label{st3}If there is any adversary node in the monitoring areas, errors would propagate in the monitoring area and some of the watchdogs would receive modified packets with high probability. Watchdogs can judge whether they receive modified packets by the rank of packet pools. For example, one can say that there is at least an adversary node located in the monitoring area when a watchdog has a packet pool of rank $5$. As soon as any of watchdogs detects the existence of adversary nodes, that watchdog will notify the other watchdogs in the same generation and trigger the next level's watchdogs together as shown in figure ~\ref{sfig:step3}. These four watchdogs will divide their common monitoring rang into four sub-areas by their corners discussed previously. Each watchdog can then duplicate what the first watchdog does in step~\ref{st1}).  Each of them awakes four arbitrary nodes in its corresponding sub-area and pass node density and next level's monitoring range, which is a quarter of a current monitoring range according to the location of the upper level's watchdog. The awaken four nodes will also approximately locate at each corner of the sub-area and there will be a total of $sixteen$ watchdogs awaken for four sub-areas of the next level ($3_{rd}$ level) as displayed in figure ~\ref{sfig:step4}. \\
\item Repeat step~\ref{st2}) and step~\ref{st3}), keep dividing the areas in a distributed way until we can locate adversary nodes in a small enough area. We define this "small enough area" by the number of nodes locating in it. When the number is small and under a threshold $\lambda$, we terminate the monitoring of this area. The number of the node in an area can be estimated by the information of node density and monitoring range which are carried by probe packets. Therefore this "small enough area" will be the least monitoring area we can divide. In the least monitoring area, it is very possible that an adversary node is chosen as a watchdog. In this case, adversary nodes may realize this is the time to temporarily act normal and stop modifying the contents of packets. The detection will fail due to adversary nodes' temporary good behaviors. Any detection in progress will be terminated if its monitoring range is under the threshold and all the nodes in this area will be marked as suspect nodes.\\
\item After some random time intervals, another arbitrary node will trigger the algorithm again and this time its monitoring range will be shifted by a short distance. In the very end of the algorithm, we will mark some small squares which contain adversary nodes. If we shift the monitoring range a little in the beginning of the algorithm, the squares we choose will not be identically overlapped but partially overlapped. This partially overlapped area may contain adversary nodes with high probability and the other non-overlapped areas, which may contain normal nodes but remarked as suspect,  would be less suspicious. In this way, we can eliminate the number of nodes who are marked as suspects but in fact are normal nodes, referred as \emph{innocent nodes}. To get the final result, each node in the network maintains a suspect table. Whenever a node is reported as a suspect, its suspect level in the other nodes' tables increases by 1. The nodes with high suspect level will be regarded as adversary nodes and isolated. Our simulation results show this shift scheme can greatly reduce the amount of mistaken nodes.
\end{enumerate}
\begin{figure}[!ht]
\centering
\subfigure[The first watchdog awakes watchdogs of next level.]{
\includegraphics[width=3.5cm,keepaspectratio]{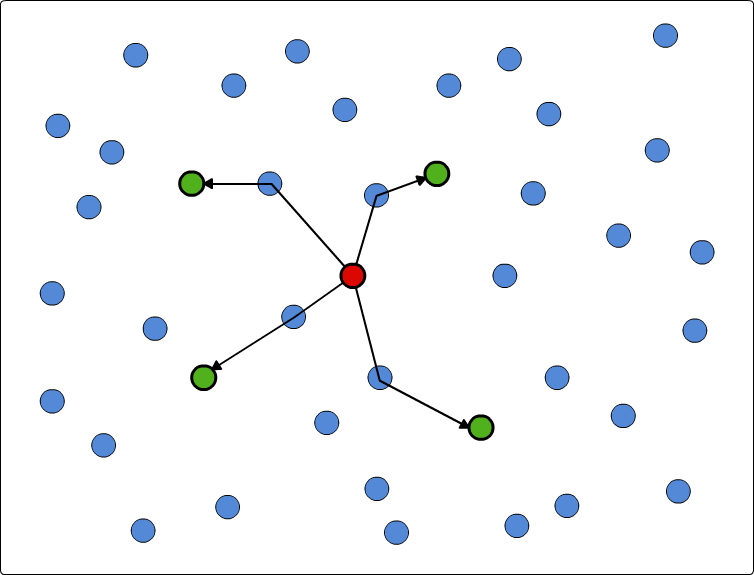}
\label{sfig:step1}
}
\subfigure[Watchdogs of next level start sending out probe packets.]{
\includegraphics[width=3.5cm,keepaspectratio]{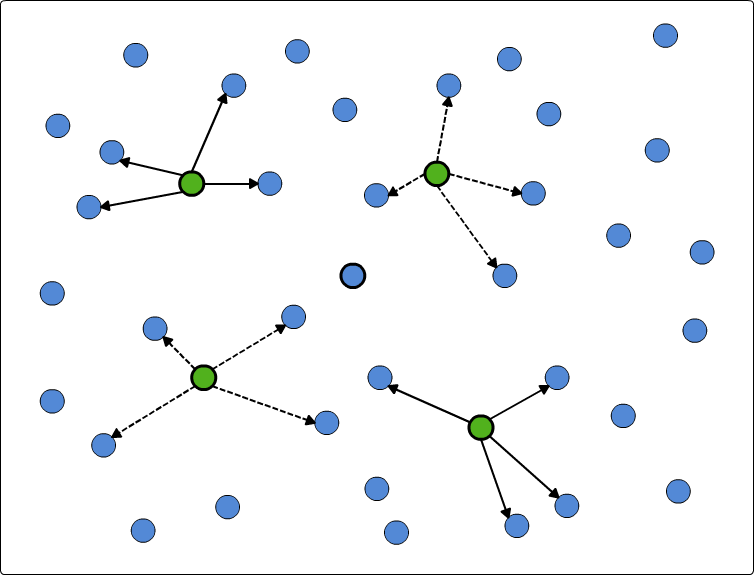}
\label{sfig:step2}
}
\subfigure[One watchdog detects errors and notifies the others.]{
\includegraphics[width=3.5cm,keepaspectratio]{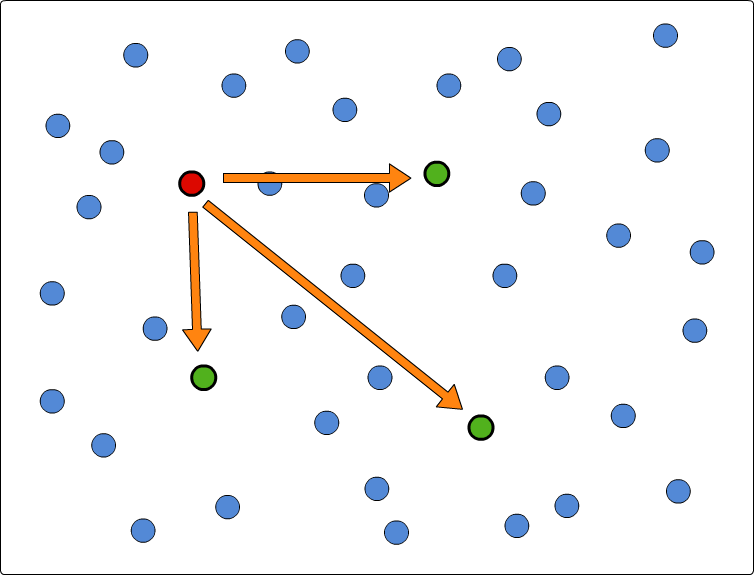}
\label{sfig:step3}
}
\subfigure[Each watchdog further awakes more watchdogs of next level.]{
\includegraphics[width=3.5cm,keepaspectratio]{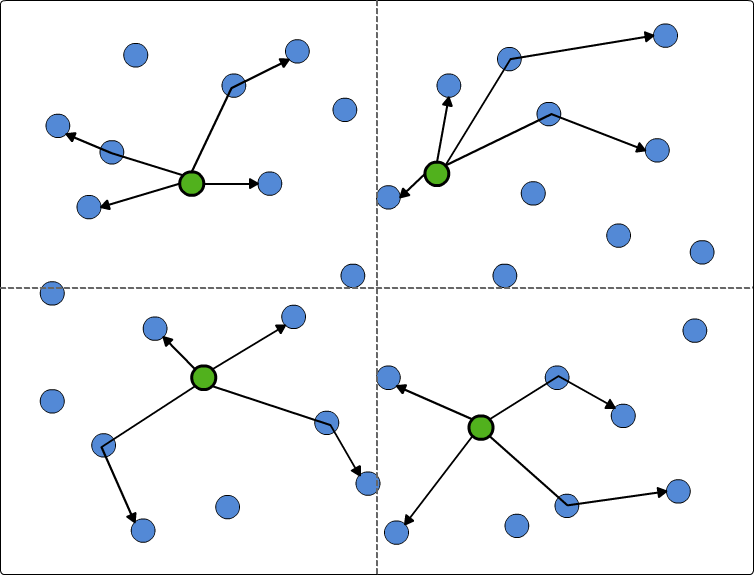}
\label{sfig:step4}
}
\caption{\label{fig:algorithm_step}Hierarchical division of the monitoring areas.}
\end{figure}
\section{Analysis and Simulation Result}
\subsection{Probe packets and time stamp} 
In most scenarios of RLNC application, the destinations do the decoding as long as they receive full rank of packets. In our algorithm, we modify this scheme that destinations don't decode to fit our requirements. Considering the worst case, to detect an adversary node is that all watchdogs gather around the center of the monitoring area and the adversary node is located at the very edge. Based on the flooding method, the least time slot required for watchdogs to receive modified packets is the hop number of the shortest path from the adversary nodes to the watchdogs, which is half diagonal of the monitoring area. Since the source of modified packets also come from watchdogs, the average number of hop for a modified packet to arrive the watchdogs is $\sqrt{2k}$. Note that $k$ is the node number of current monitoring area, which is accessible information for watchdogs. We can set time stamps of each generation with this number $\sqrt{2k}$ to assure that watchdogs can receive modified packets and trigger the next level whenever there are Byzantine nodes. When a time stamp is expired, its corresponding nodes will terminate disseminating packets and empty their buffer.
\subsection{Range of shifting}
Simply repeating the algorithm won't perform better since the sub-areas are equally divided. If the algorithm starts with the same monitoring area, it will eventually lead to the same result and be in vain. Thus we shift the starting monitoring area in order to minimize the number of innocent nodes. Now the question is how many we should shift each time. It is straightforward to see that if we shift more than a single least monitoring area, this shift is useless. Hence we know the shift range should be no larger than the length of edge of the least monitoring area.

The purpose that we use shift scheme is to further divide the least monitoring area into smaller areas so that we can eliminate the number of innocent nodes. To this end, we shift in both horizontal and vertical directions to let overlapped areas divide the least monitoring area into \emph{four} smaller areas. Hence the question has become how to divide these four smaller areas in order to get the least innocent nodes. Basically we have two options here, equal division and non-equal division. In fact, the equal division method will have the least expected value of innocent nodes. The mathematical analysis is in section V, and the simulation results also support our idea.
\subsection{Innocent nodes and overhead}
When we mark the nodes in the least monitoring area as suspect nodes, we mark all the nodes in the area. In fact, some nodes are normal nodes but marked as suspect, and we call them \emph{innocent nodes}.  Consider the case which we only perform identification algorithm once without using suspect table. It is straightforward that uniform distribution of Byzantine nodes can lead to the worst result with the most innocent nodes. The ratio of innocent nodes is upper bounded by $\dfrac{(\mu -1)z_0}{n}$ and this bound grows linearly with respect to the number of Byzantine nodes and $\mu$, which is quite a large number. Besides, probe packets carry no data information and the amount of probe packets transmitted of all generations in each level is $O(n\sqrt{n})$. In one identification algorithm, it will trigger $O(\log n)$ levels totally and therefore total number of transmitted probe packets is $O(n \sqrt{n}\log n)$ in time $O(\sqrt n)$.

\subsection{Simulation Results}
In our simulation, we uniformly distribute 400, 600 ,800 and 1000 nodes in a square area with width of 800 and node communication range is 50. We simulate our algorithm under the circumstance of the amount of adversary nodes varying from 5 to 45 and these adversaries are uniformly and normally distributed. Fig. ~\ref{result_origin} is the first result of our algorithm, we can see that the innocent ratio of uniform distribution pattern is quite high. The uniform distribution pattern is the worst case to our algorithm. In order to decrease the amount of innocent nodes, we introduce shift scheme. The result are shown in Fig. ~\ref{result_shift}. The result with more nodes is in Fig. ~\ref{result_more}. As we can see, our algorithm performs better in a dense topology. Performing shift scheme in our algorithm can eliminate innocent ratio effectively, but it also drags down the catch ratio a little bit. Because shift scheme also generates holes around boundaries, which can not be detected sometimes. The result shows that the catch ratio only drops a little, which is an acceptable value.

\begin{figure}[!t]
\centering
\subfigure[Innocent ratio of uniform distribution]{
\includegraphics[width=4cm]{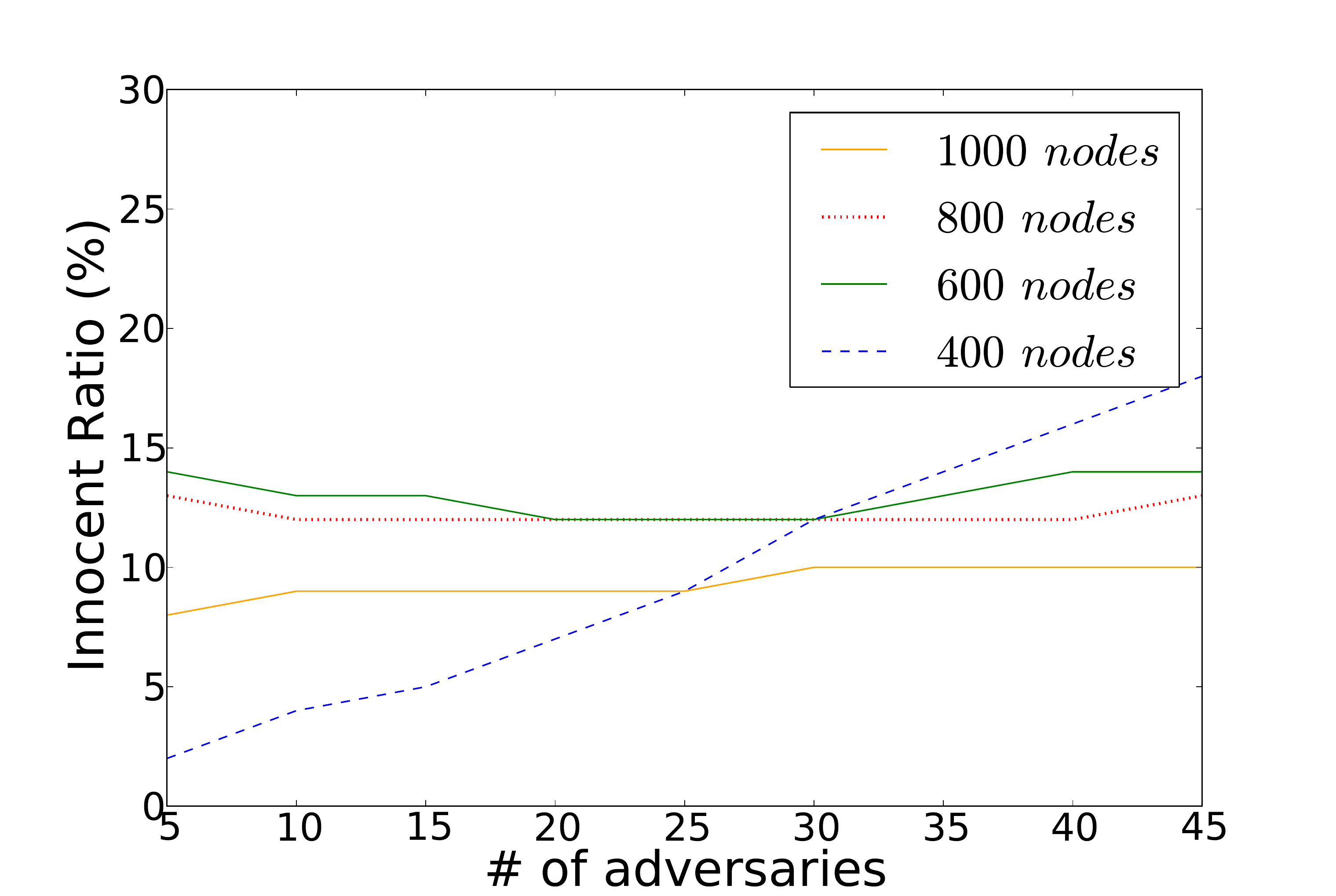}
\label{sf:in_uni}
}
\subfigure[Catch ratio of uniform distribution]{
\includegraphics[width=4cm]{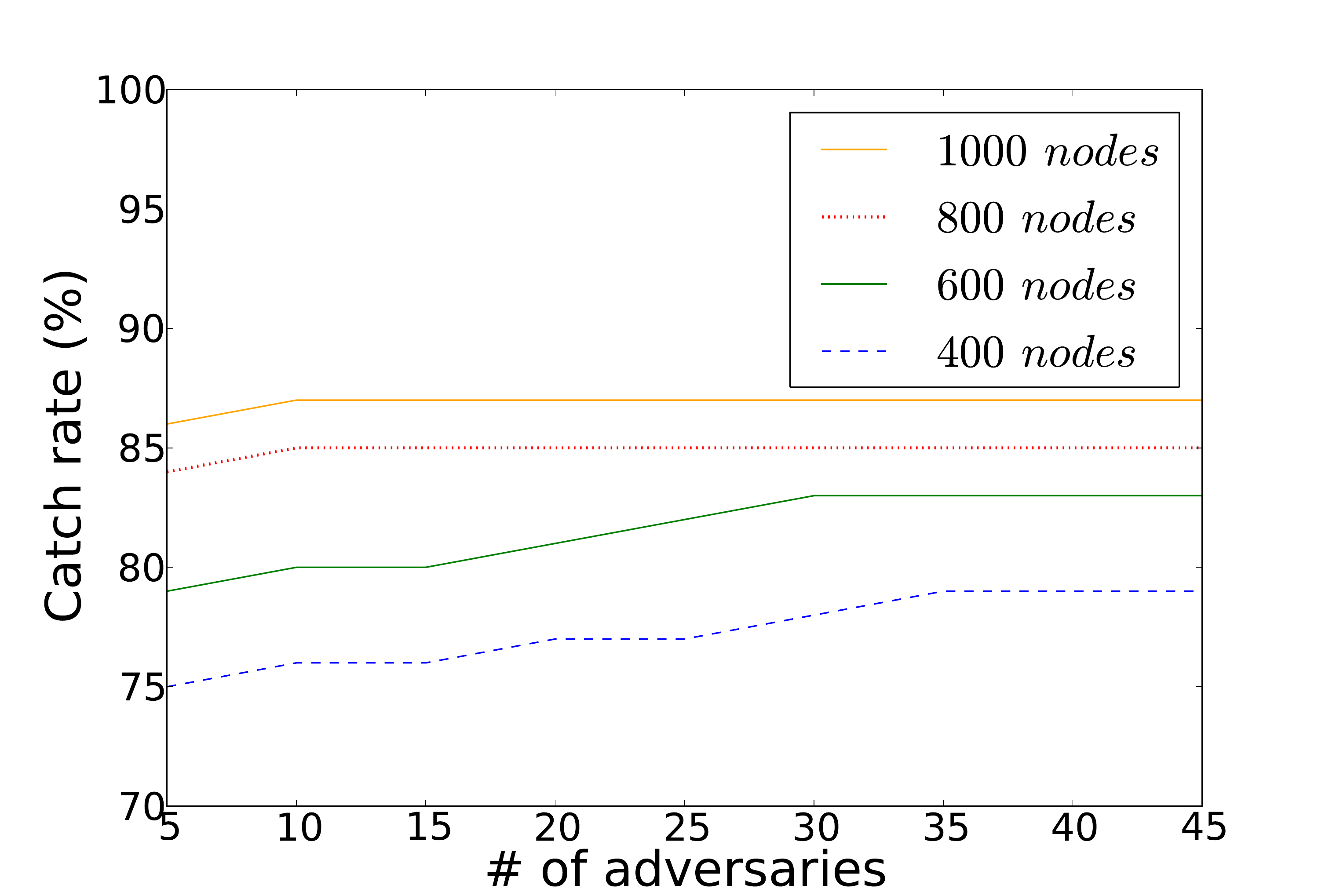}
\label{sf:cr_uni}
}
\subfigure[Innocent ratio of Gaussian distribution]{
\includegraphics[width=4cm]{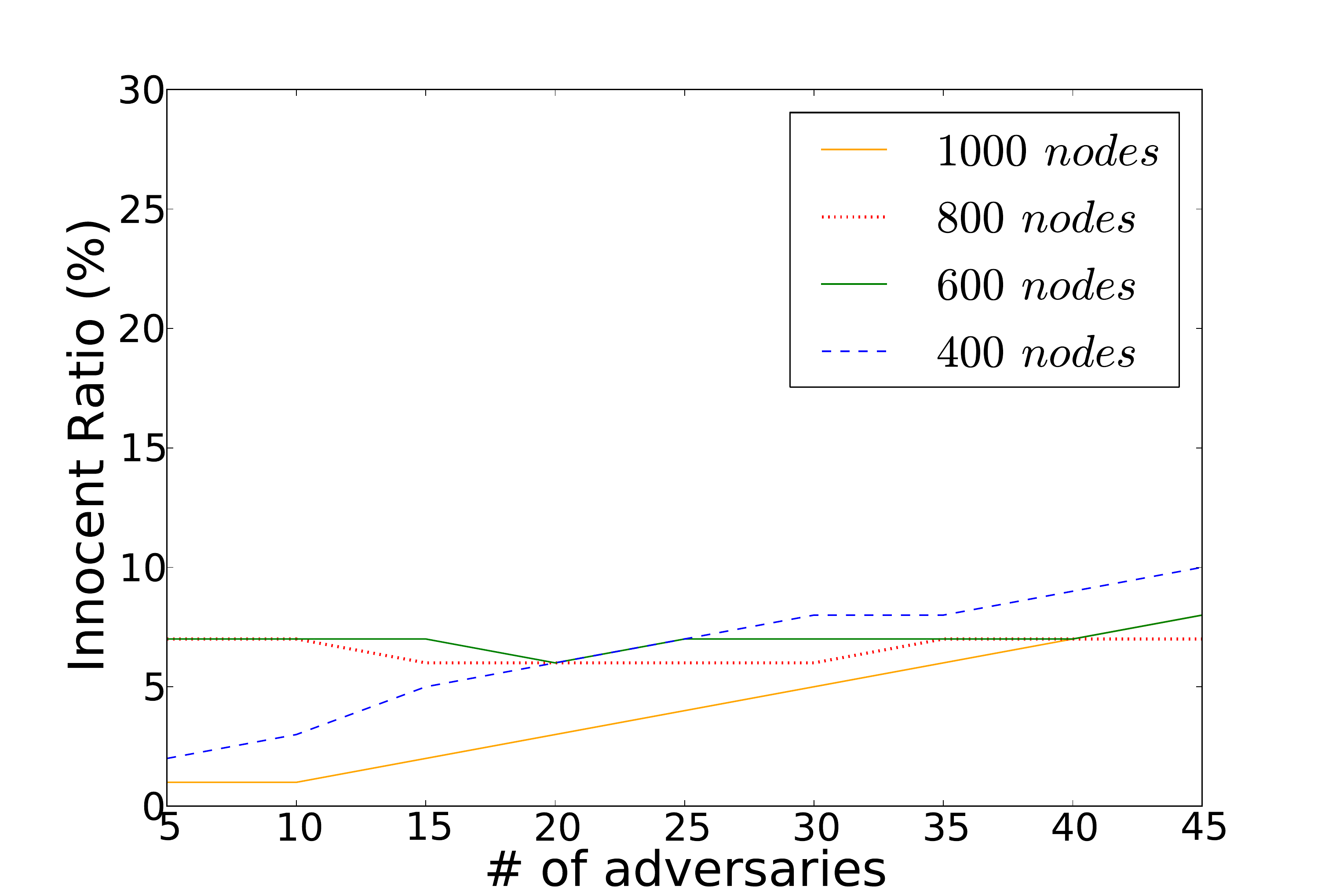}
\label{sf:in_gau}
}
\subfigure[Catch ratio of Gaussian distribution]{
\includegraphics[width=4cm]{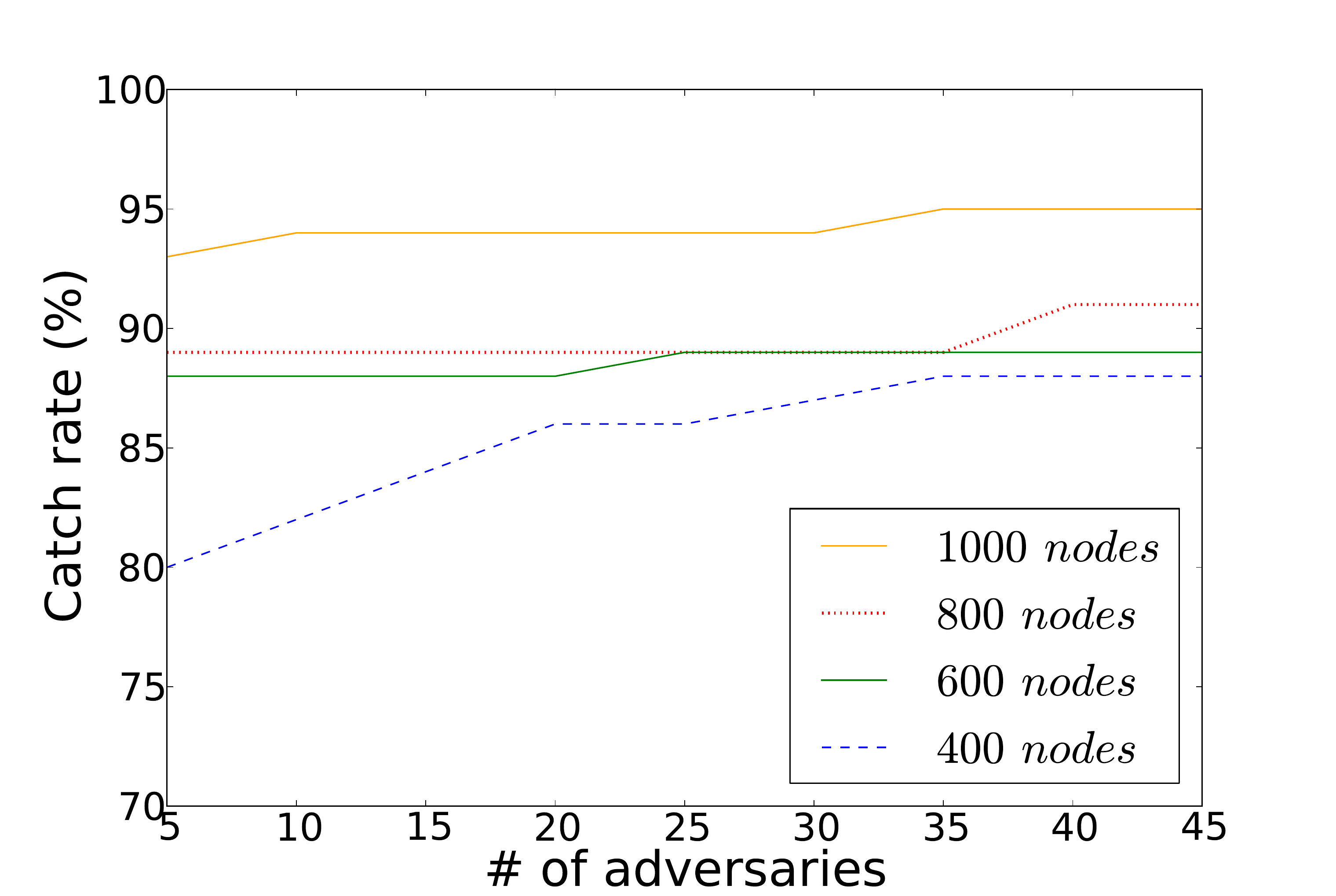}
\label{sf:cr_gau}
}
\caption{Innocent ratio and Byzantine catch ratio for two different distribution pattern of adversaries}
\label{result_origin}
\end{figure}
\begin{figure}[!t]
\centering
\subfigure[Innocent ratio of uniform distribution]{
\includegraphics[width=4cm]{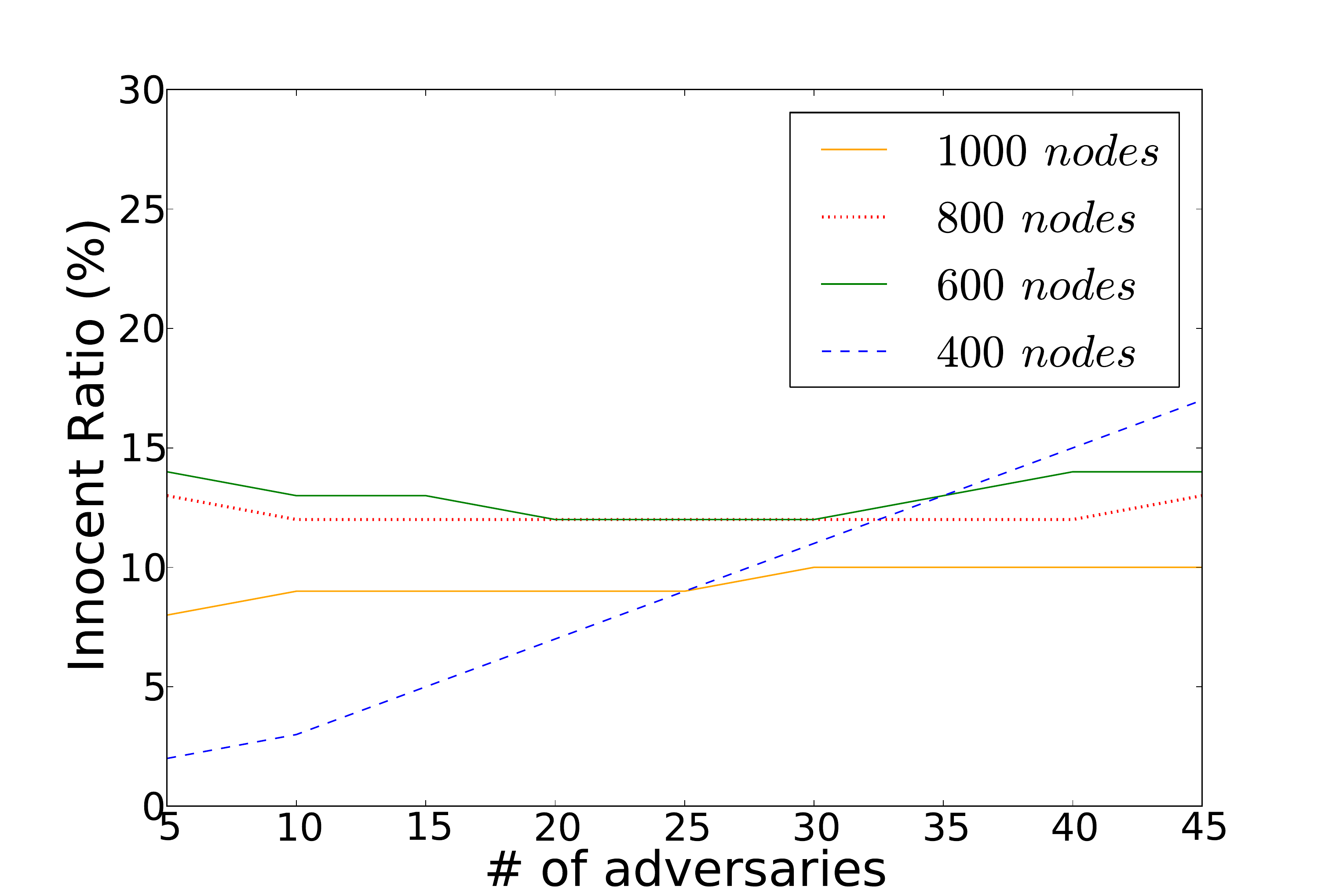}
\label{sf:in_uni_fs}
}
\subfigure[Catch ratio of uniform distribution]{
\includegraphics[width=4cm]{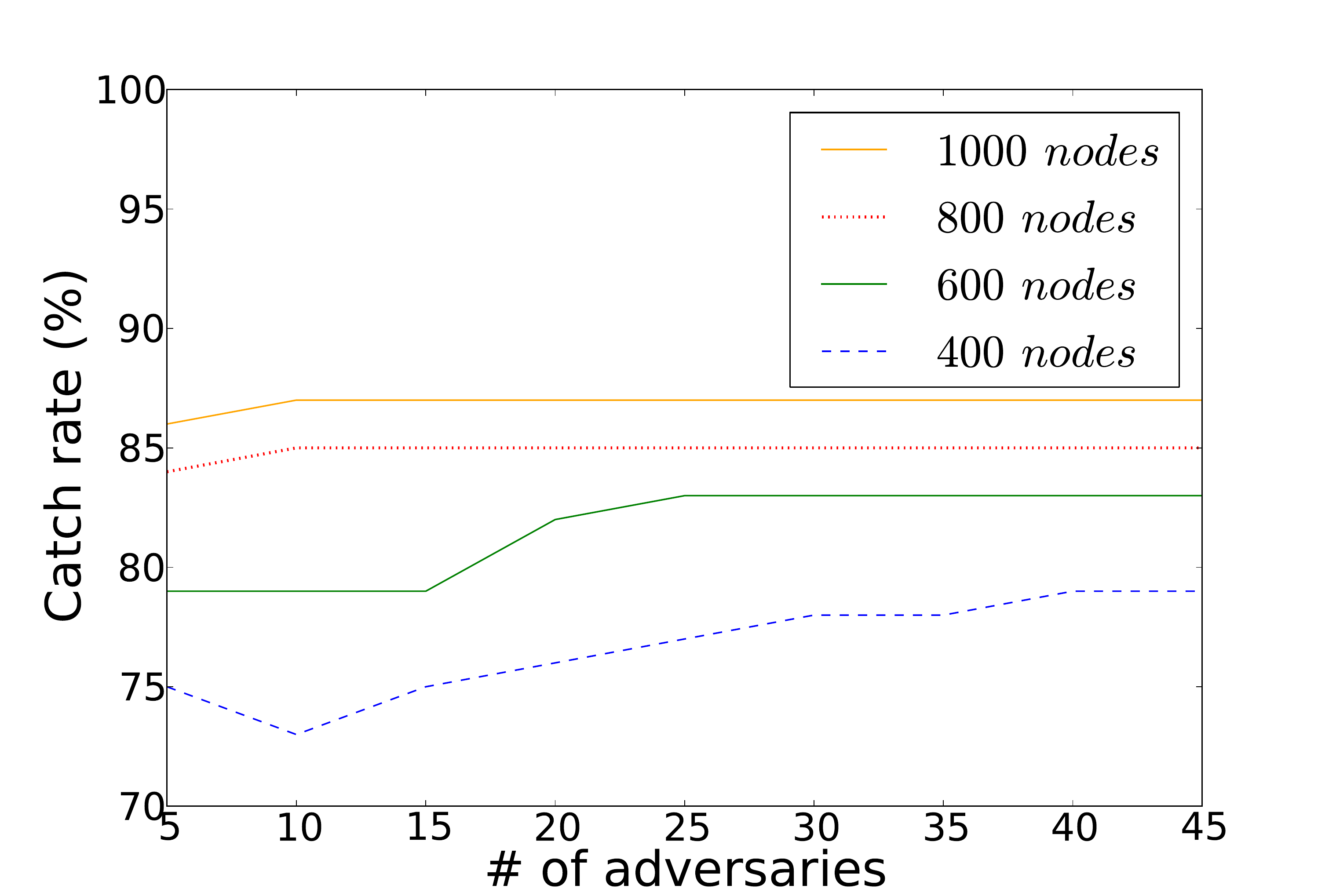}
\label{sf:cr_uni_fs}
}
\subfigure[Innocent ratio of Gaussian distribution]{
\includegraphics[width=4cm]{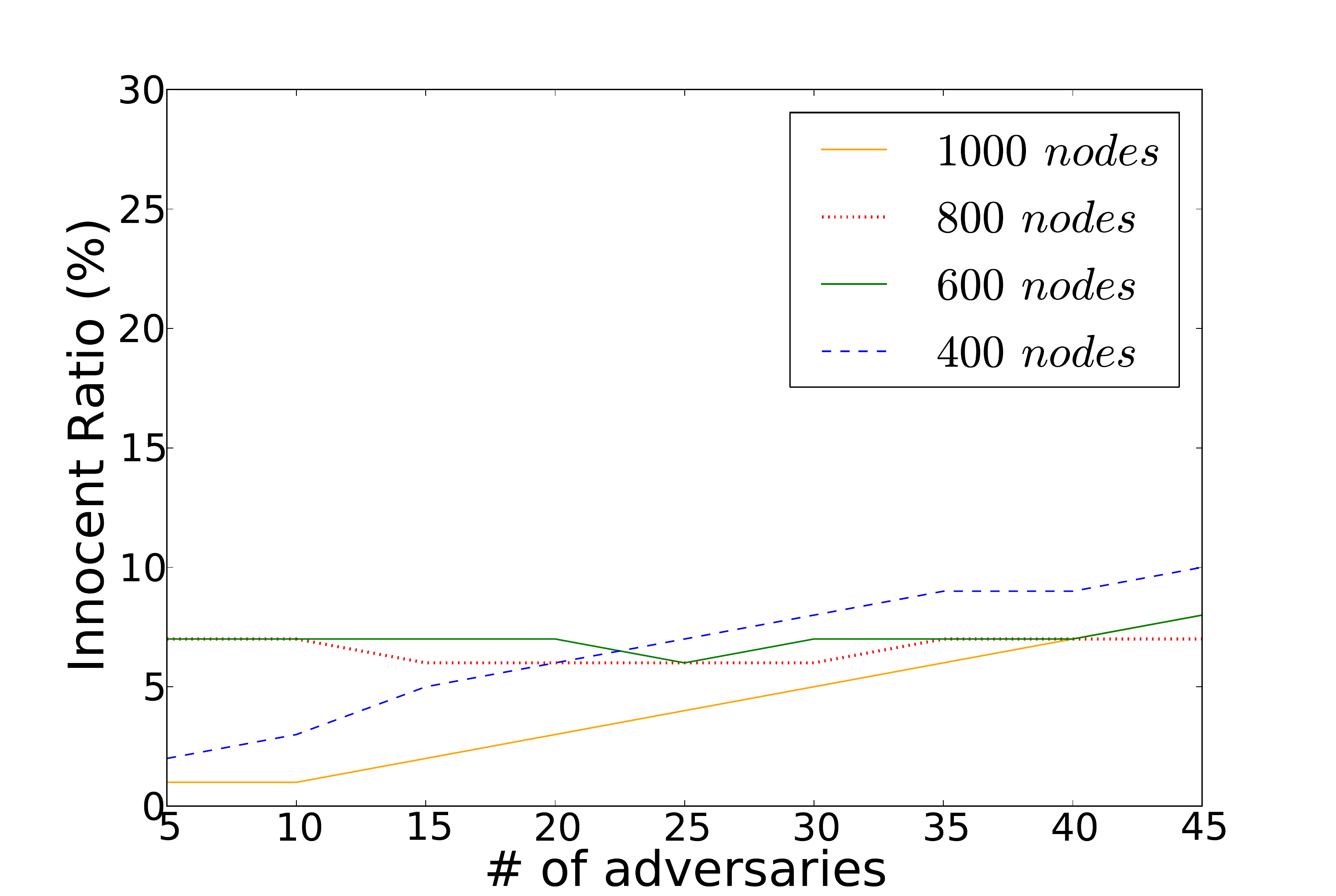}
\label{sf:in_gau_fs}
}
\subfigure[Catch ratio of Gaussian distribution]{
\includegraphics[width=4cm]{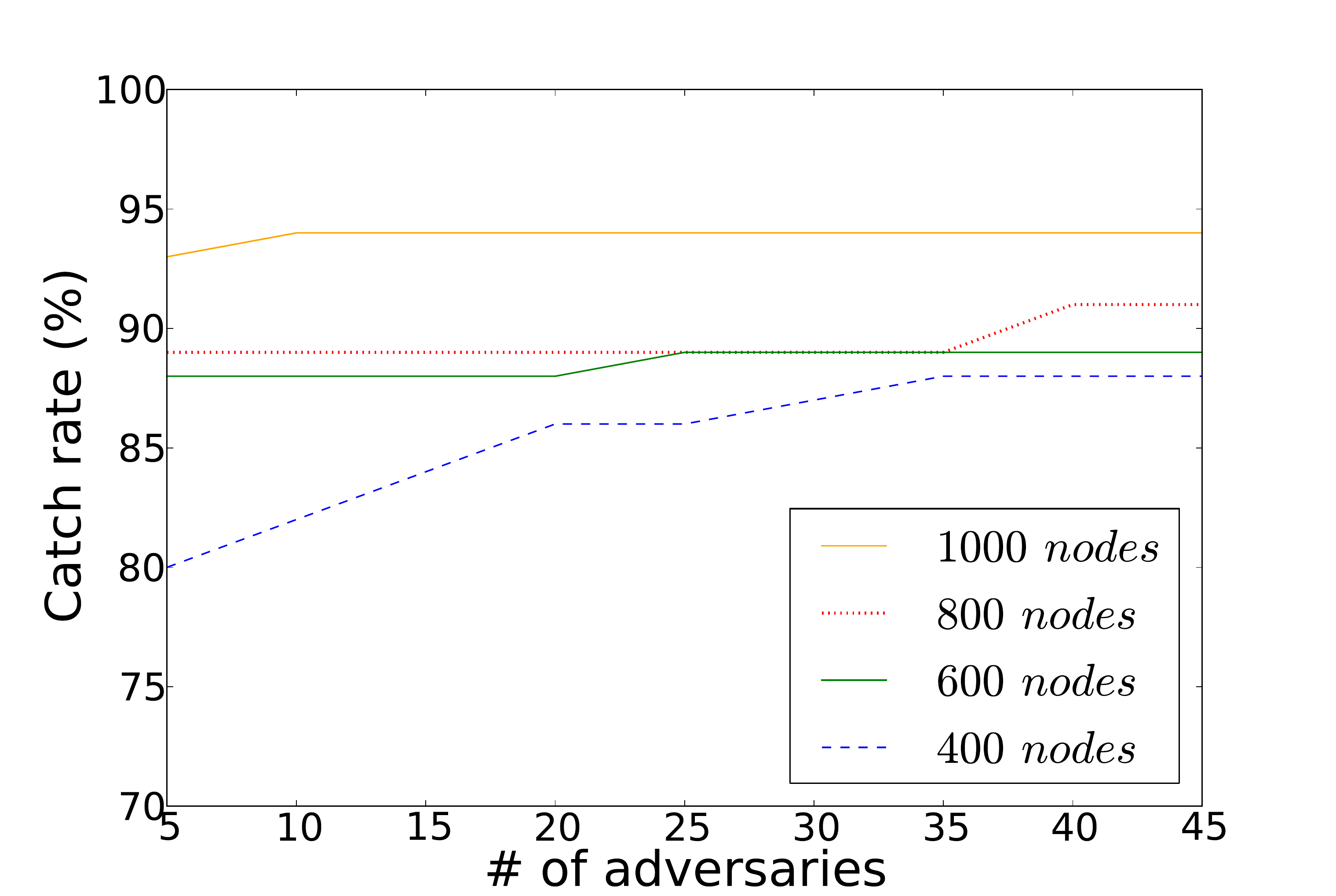}
\label{sf:cr_gau_fs}
}
\caption{Innocent ratio and Byzantine catch ratio with shift scheme}
\label{result_shift}
\end{figure}

\section{Analysis}
The shift scheme aims to further divide the least monitoring areas into smaller areas so that we can decrease the number of innocent nodes. With it, the final results of marked areas in each run of algorithm will be different. The overlapped marked areas are smaller than the least monitoring areas and contain less innocent nodes. Considering the case that overlapped areas divide a least monitoring area $A$ into four smaller areas, $A_1$, $A_2$, $A_3$ and $A_4$. The expectation number of innocent nodes will reach a minimum value while $A_1 = A_2 = A_3 = A_4$. We now prove our claim.
\begin{figure}[!t]
\centering
\subfigure[Innocent ratio of uniform distribution]{
\includegraphics[width=4cm]{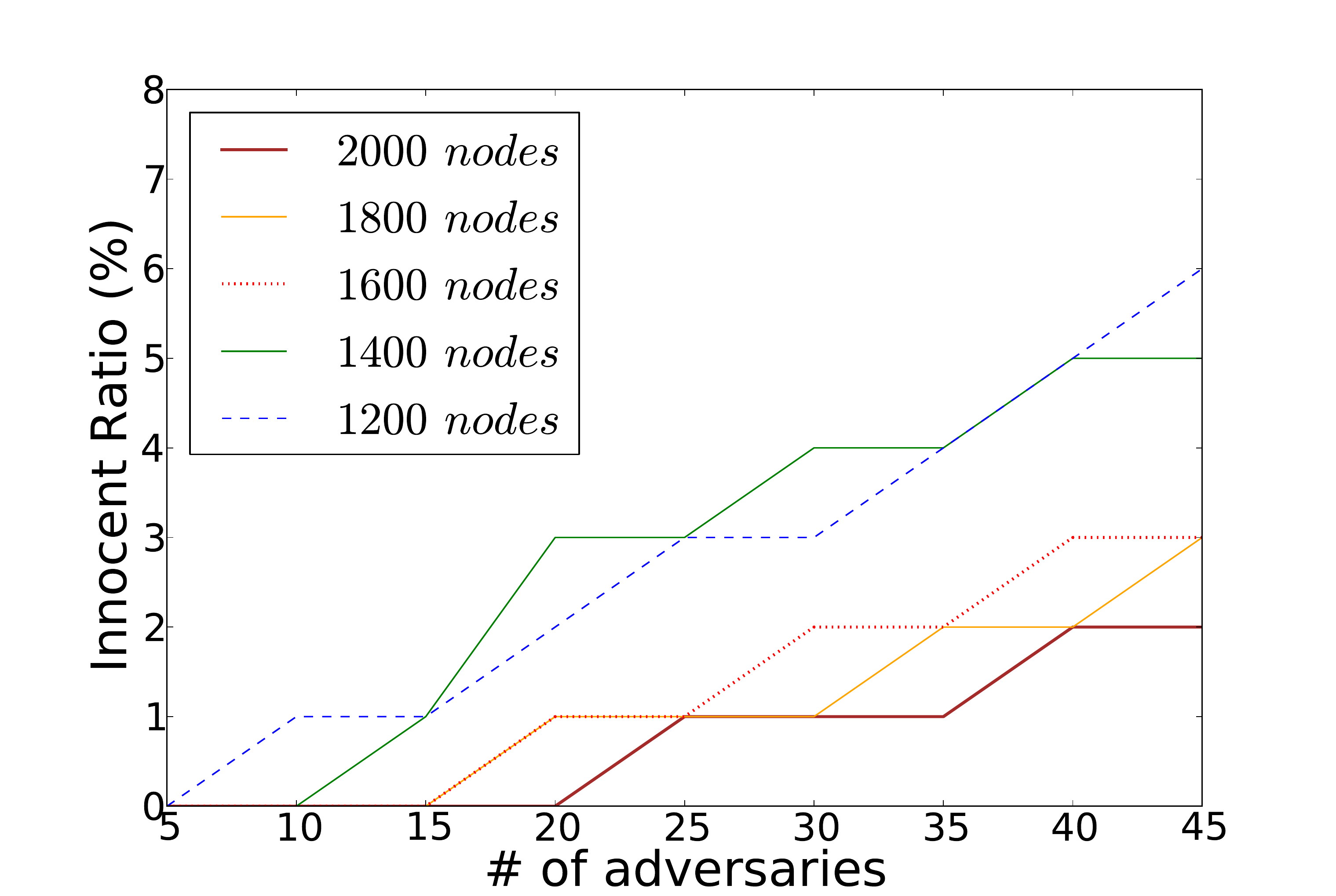}
\label{sf:in_uni_s125}
}
\subfigure[Catch ratio of uniform distribution]{
\includegraphics[width=4cm]{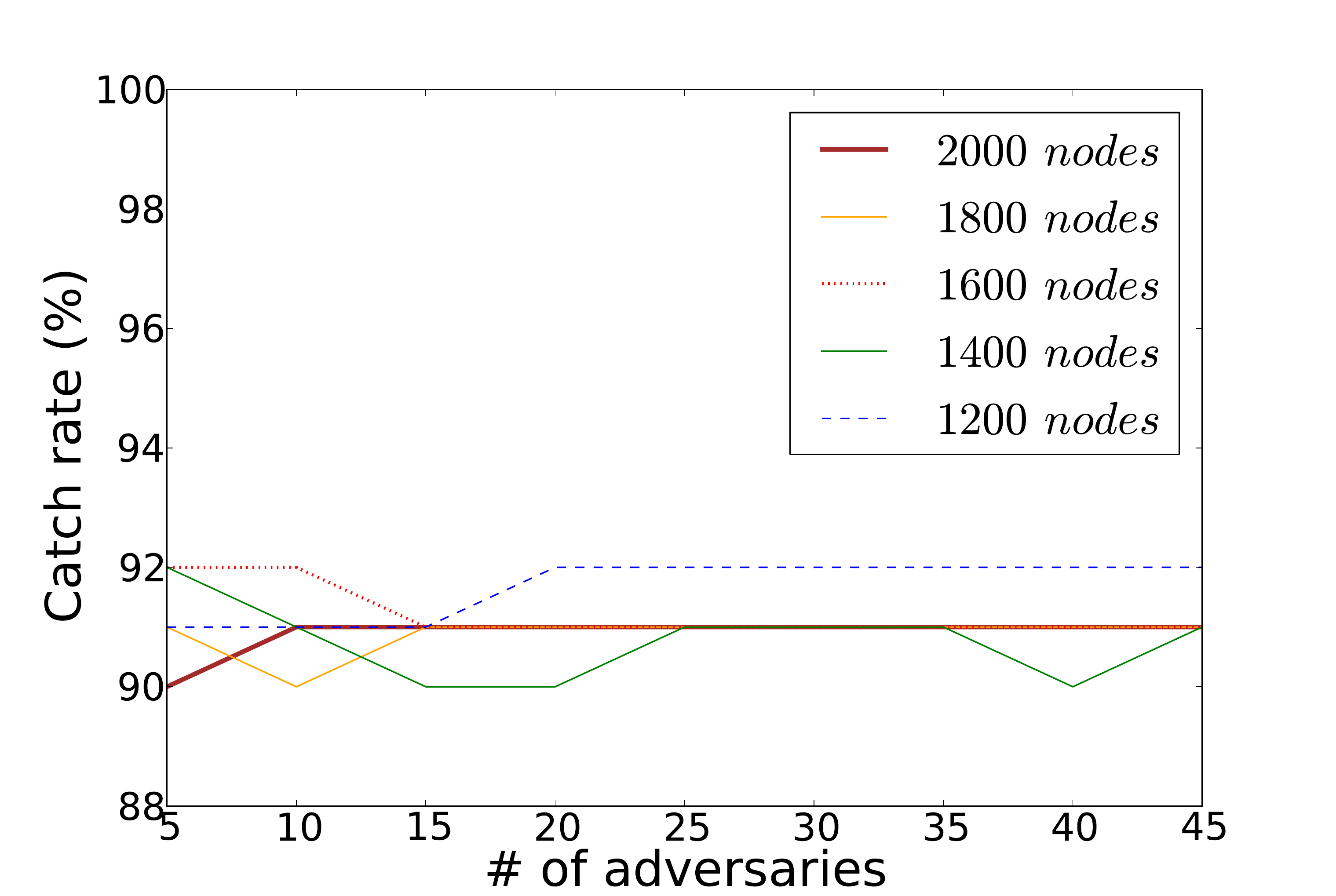}
\label{sf:cr_uni_s125}
}
\subfigure[Innocent ratio of Gaussian distribution]{
\includegraphics[width=4cm]{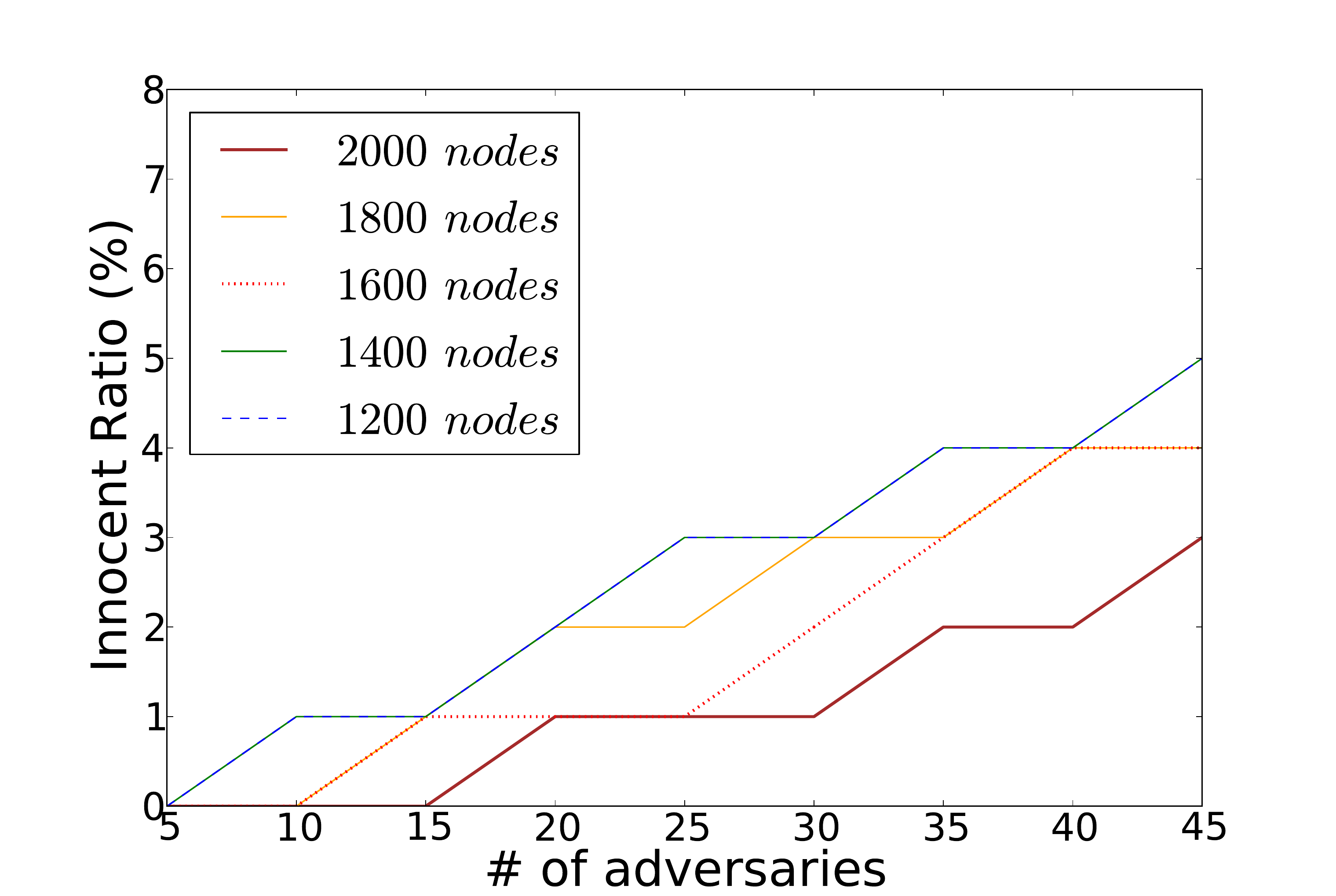}
\label{sf:in_gau_s125}
}
\subfigure[Catch ratio of Gaussian distribution]{
\includegraphics[width=4cm]{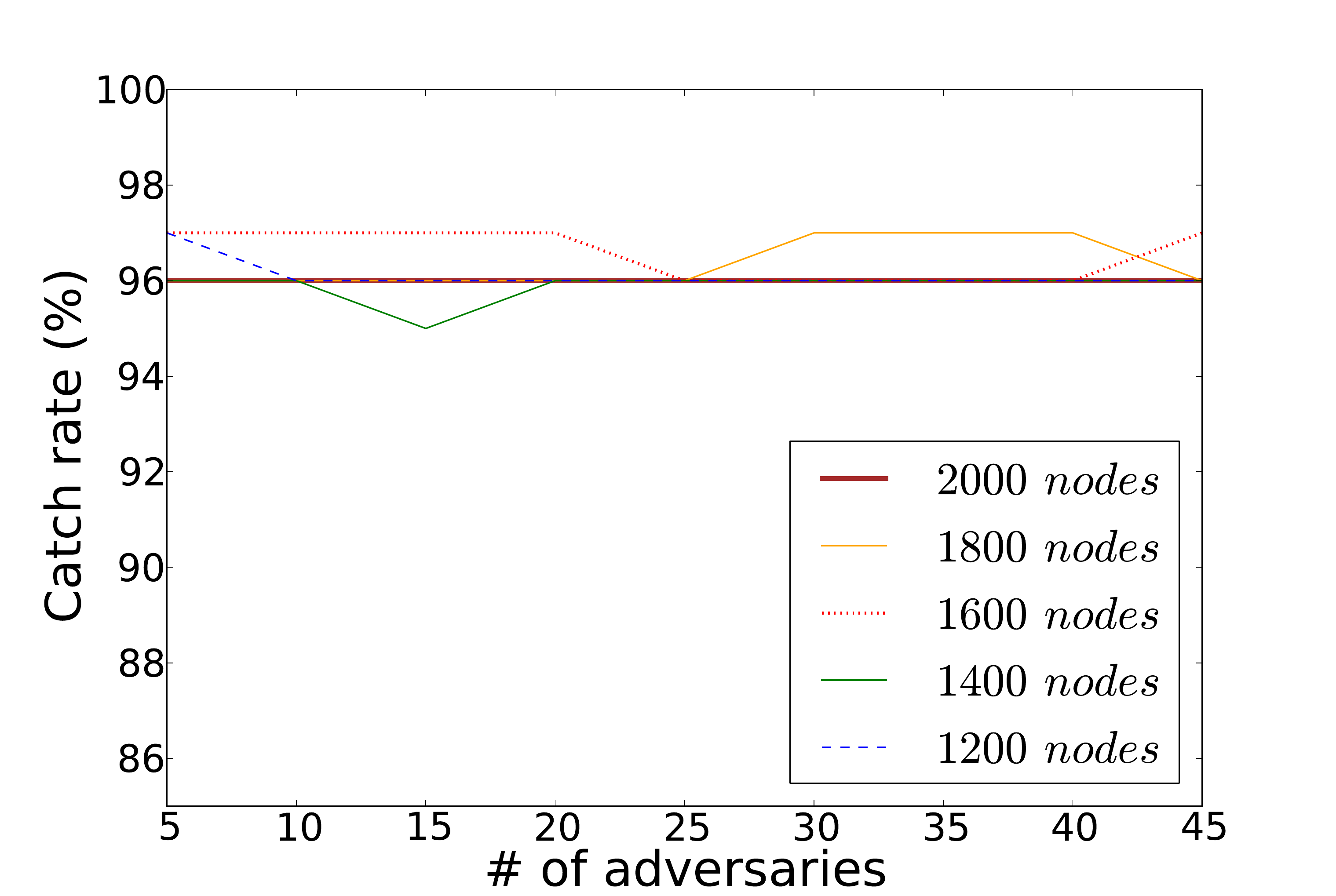}
\label{sf:cr_gau_s125}
}
\caption{Results for more nodes}
\label{result_more}
\end{figure}

\begin{claim}
The expectation value of number of innocent nodes will reach a minimum when the least monitoring area $A$ is divided into four equal areas.
\end{claim}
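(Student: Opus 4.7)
The plan is to reduce the claim to a straightforward convexity argument on the sum of squared sub-area sizes, after writing down the expected innocent-node count explicitly. Let $a = |A|$ and $a_i = |A_i|$ for $i=1,\dots,4$ with $a_1 + a_2 + a_3 + a_4 = a$. Following the paper's model, I would assume nodes are uniformly distributed in $A$ with density $\rho$ and that, conditional on an adversary being somewhere in the least monitoring area, its location is uniform over $A$ (this matches the worst-case assumption already made in Section IV-C that uniform spatial distribution of Byzantines maximises innocent counts, so uniformity is the natural and paper-consistent assumption).

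Next, I would compute $E[\text{\#innocent}]$. With the shift scheme, only the sub-area that ends up flagged in both runs is reported, and every non-adversarial node in that flagged sub-area is an innocent. If the adversary falls in $A_i$ (probability $a_i/a$), then the number of innocents produced is the number of normal nodes in $A_i$, whose expectation is $\rho a_i - 1$. Summing,
\begin{equation*}
E[\text{\#innocent}] \;=\; \sum_{i=1}^{4} \frac{a_i}{a}\bigl(\rho a_i - 1\bigr) \;=\; \frac{\rho}{a}\sum_{i=1}^{4} a_i^{2} \;-\; 1.
\end{equation*}
So minimising the expectation is equivalent to minimising $\sum_i a_i^2$ subject to the linear constraint $\sum_i a_i = a$ and $a_i \ge 0$.

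Then I would invoke convexity: since $x \mapsto x^2$ is strictly convex, Jensen's inequality (equivalently Cauchy--Schwarz, or a Lagrange-multiplier calculation setting $\partial/\partial a_i(\sum a_j^2 - \lambda \sum a_j) = 0$) gives $\sum_i a_i^2 \ge 4 (a/4)^2 = a^2/4$, with equality iff $a_1 = a_2 = a_3 = a_4 = a/4$. Substituting back yields
\begin{equation*}
E[\text{\#innocent}] \;\ge\; \frac{\rho a}{4} - 1,
\end{equation*}
achieved precisely by the equal division, which is what the claim asserts.

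The main obstacle is not the optimisation itself, which is elementary, but defending the modelling step that turns the stated event ``innocent node'' into the clean expression $\rho a_i - 1$. I would need to argue that (a) the partition $\{A_i\}$ induced by the shift really is a partition of $A$ (so the probabilities $a_i/a$ sum to one), (b) the rare case of two adversaries inside the same least monitoring area can be absorbed into a lower-order term or handled by linearity of expectation over adversaries, and (c) the number of normal nodes in $A_i$ is well-approximated by $\rho a_i$ once the density is moderate, which is consistent with the paper's standing assumption that node density is known and used as a termination criterion. Once these modelling points are dispatched, the convexity step finishes the claim.
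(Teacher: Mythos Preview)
Your optimisation step is correct and cleaner than the paper's, but your modelling step differs from the paper's in a way that matters. The paper allows $k$ adversaries in the least monitoring area (with $\mu$ nodes, $\mu=5$, so $k$ can be up to $4$). The probability that sub-area $A_i$ gets flagged is then $1-(1-a_i)^k$, not $a_i/a$, and the expected innocent count becomes
\[
E(k)=\mu-\mu\sum_{i=1}^{4}a_i(1-a_i)^k .
\]
For $k=1$ this collapses to $\mu\sum a_i^2$, exactly your expression, and your convexity/Jensen argument finishes immediately. For $k\ge 2$, however, $x\mapsto x(1-x)^k$ is \emph{not} concave on all of $[0,1]$ (its second derivative changes sign at $x=2/(k+1)$), so a straight Jensen argument fails. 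The paper therefore resorts to Lagrange multipliers plus a second-order check on the tangent space, obtaining that the equal split is optimal precisely when $k<7$; this is where the algorithm's choice $\mu=5$ is invoked.

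Your point (b), absorbing multiple adversaries via linearity of expectation, does not rescue the argument: the event ``$A_i$ is flagged'' is a union over adversaries, not a sum, so linearity over adversaries double-counts innocents in sub-areas containing more than one adversary and does not reproduce the $1-(1-a_i)^k$ factor. If you want to keep your elementary route, you would need either to argue that $k=1$ is the only relevant case (which the paper does not assume), or to show concavity of $x(1-x)^k$ on a neighbourhood of $1/4$ large enough to contain the feasible region, which essentially recovers the paper's $k<7$ condition.
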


\begin{proof}
Assume that the area $A$ is of size $1$ and divided into four areas, $A_1$, $A_2$, $A_3$ and $A_4$, with the area size of $a_1$, $a_2$, $a_3$ and $a_4$. We have $a_1+a_2+a_3+a_4=1$ and $a_1$, $a_2$, $a_3$, $a_4 > 0$. The least monitoring area $A$ has $\mu$ nodes totally and $k$ of the $\mu$ nodes are adversary nodes. Clearly $k < \mu$\\
The expectation number of innocent nodes is 
\begin{subequations}
\begin{align*}
E(k)= &[1-(1-a_1 )^k]a_1 \mu +[1-(1-a_2 )^k]a_2 \mu + \\
      &[1-(1-a_3 )^k]a_3 \mu +[1-(1-a_4 )^k]a_4 \mu \\
= & (a_1 +a_2 +a_3 +a_4)\mu - \\
  &[a_1 (1-a_1 )^k +a_2 (1-a_2 )^k +a_3 (1-a_3 )^k +a_4 (1-a_4 )^k ]\mu \\
= & \mu-[a_1 (1-a_1 )^k +a_2 (1-a_2 )^k +a_3 (1-a_3 )^k +a_4 (1-a_4 )^k ]\mu.
\end{align*}
\end{subequations}
We want to have $E(k)\geq$ some constant $c$, so the problem becomes
\begin{subequations}
\begin{align*}
\mbox{maximize  }&~\mathbf{x_1 (1-x_1 )^k +x_2 (1-x_2 )^k +x_3 (1-x_3 )^k +x_4 (1-x_4 )^k} \\
\mbox{subject to  }&~\mathbf{x_1 +x_2 +x_3 +x_4 = 1}.
\end{align*}
\end{subequations}
We denote $f(\mathbf{x})=x_1 (1-x_1 )^k +x_2 (1-x_2 )^k +x_3 (1-x_3 )^k +x_4 (1-x_4 )^k$ and $h(\mathbf{x})=x_1+x_2+x_3+x_4-1$. By the Lagrange condition, we have
\begin{subequations}
\begin{align*}
(1-x_1)^k -&kx_1(1-x_1)^{k-1} + \lambda = 0 \\
(1-x_2)^k -&kx_1(1-x_2)^{k-1} + \lambda = 0 \\
(1-x_3)^k -&kx_1(1-x_3)^{k-1} + \lambda = 0 \\
(1-x_4)^k -&kx_1(1-x_4)^{k-1} + \lambda = 0 \\
x_1 +&x_2+x_3 +x_4 = 1.
\end{align*}
\end{subequations}
Obviously, the solution to theses equations is 
\[
x_1=x_2=x_3=x_4=\dfrac{1}{4}~~and~~\lambda=(\dfrac{k}{4}-\dfrac{3}{4})(\dfrac{3}{4})^{k-1}
\]
Thus $\mathbf{x^{\ast}}=[\frac{1}{4},\frac{1}{4},\frac{1}{4},\frac{1}{4}]^{\top}.$

Now we need to resort to the second-order sufficient conditions to determine if the problem reaches a maximum or minimum at $x_1=x_2=x_3=x_4=\dfrac{1}{4}$. Let $l(\mathbf{x},\mathbf{\lambda})=f(\mathbf{x})+\lambda^{\top}h(\mathbf{x})$ and $\mathbf{L(x,\lambda)}$ be the Hessian matrix of $l(\mathbf{x,\lambda})$. We can find the matrix 
\begin{subequations}
\begin{align*}
\mathbf{L(x^{\ast},\lambda)}~=&~ \mathbf{F(x^{\ast})}+\lambda \mathbf{H(x^{\ast})} \\~=&~
\begin{bmatrix}                
  \mathbf{g}(k) & 0 & 0 & 0 \\
  0 & \mathbf{g}(k) & 0 & 0 \\
  0 & 0 & \mathbf{g}(k) & 0 \\
  0 & 0 & 0 & \mathbf{g}(k) 
\end{bmatrix}, \\
\end{align*}
\end{subequations}
where $\mathbf{g}(k)~=~(\dfrac{3}{4})^{k-2}(\dfrac{k-7}{4})$. On the tangent space $M=\left\{\mathbf{y}\mid y_1+y_2+y3+y_4=0\right\}$, we note that 
\begin{subequations}
\begin{align*}
\mathbf{y^{\top}Ly}~=&~y_1^2(\dfrac{3}{4})^{k-2}(\dfrac{k-7}{4}) +y_2^2(\dfrac{3}{4})^{k-2}(\dfrac{k-7}{4}) + \\
&~y_3^2(\dfrac{3}{4})^{k-2}(\dfrac{k-7}{4}) +y_4^2(\dfrac{3}{4})^{k-2}(\dfrac{k-7}{4}) 
<~0,\\
&~\mbox{for $k<7$ and all $y\neq0$}.
\end{align*}
\end{subequations}
Thus $L$ is negative definite on $M$ when $k<7$ and $f$ reaches a maximum. In our algorithm, we set our $\mu=5$, and $k<\mu$ obviously. Therefore, we can always reach a minimum expectation value in our setup and it happens at $a_1=a_2=a_3=a_4=\dfrac{1}{4}$.

\end{proof}

\section{Conclusions and further work}
We have proposed a locating algorithm in appliance of RLNC to locate compromised Byzantine nodes in a network. Our algorithm can locate the areas in where adversary nodes locate with some normal nodes being mistaken as adversary nodes. To reduce the number of mistaken nodes, we use a shift scheme to eliminate the probability of being mistaken. The simulation results show that our algorithm performs well in Guassian distribution pattern for adversary nodes. In the worst case, uniform distribution pattern for adversary nodes, we still can locate most adversary nodes and reduce almost $10\%$ of mistaken ratio by shift scheme. We also gives discussion about the best policy for shift scheme. Fixing the shift range to the half length of the least monitoring area has the best performance.

Even though we do locate the areas where adversary nodes lie, but there still exist mistaken nodes. A second stage algorithm is required in order to precisely identify each adversary node. Sampling each node one by one in the most suspicious area or combining some special coding scheme with our algorithm may be a worthy researching direction.

\bibliographystyle{IEEEtran}
\bibliography{myref}

\end{document}